\newtheorem{theorem}{Theorem}[section]
\newtheorem{lemma}[theorem]{Lemma}
\theoremstyle{definition}
\newtheorem{definition}[theorem]{Definition}
\newtheorem{example}[theorem]{Example}
\renewcommand{\d}{\mathrm{d}}
\newcommand{\EL}{\mathcal{EL}}
\title[Covariant Hamiltonian dynamics with constraints]{Covariant Hamiltonian first order field theories with constraints on manifolds with boundary: the case of Hamiltonian dynamics}
\author{A. Ibort}
\address{ICMAT and Depto. de Matem\'aticas, Univ. Carlos III de Madrid, \\ Avda. de la
Universidad 30, 28911 Legan\'es, Madrid, Spain.\\
E-mail: albertoi@math.uc3m.es}
\author{A. Spivak}
\address{Dept. of Mathematics, Univ. of California at Berkeley, \\ 903 Evans Hall, 94720 Berkeley CA, USA.\\
E-mail: ameliaspivak@gmail.com}
\keywords{Hamiltonian dynamics, field theories, boundaries.}
\begin{document}

\begin{abstract}
Inspired by problems arising in the geometrical treatment of Yang-Mills theories and Palatini's gravity, the covariant formulation of Hamiltonian dynamical systems as a Hamiltonian field theory of dimension $1+0$ on a manifold with boundary is discussed.   After a precise statement of Hamilton's variational principle in this context, the geometrical properties of the space of solutions of the Euler-Lagrange equations of the theory are analyzed. A sufficient condition is obtained that guarantees that the set of solutions of the Euler-Lagrange equations at the boundary of the manifold, fill a Lagrangian submanifold of the space of fields at the boundary.  Finally a theory of constraints is introduced that mimics the constraints arising in Palatini's gravity.   
\end{abstract}

\maketitle

\section{Introduction}\label{sec:introduction}
In this paper we analyze in depth the theory of Hamiltonian dynamical systems, viewing them as first-order covariant Hamiltonian field theories on a manifold of dimension 1+0 with boundary. The reason for our interest in such simple field theories is twofold: First, it allows us to discuss in a simple setting two features which are relevant to two fundamental examples of first-order covariant Hamiltonian field theories, Yang-Mills and especially Palatini gravity.
These features are the introduction of constraints and the 'topological phases' of the theory. Secondly, it allows us to test some common assumptions, among them the role of boundary conditions and the geometry of the restriction to the boundary of the space of solutions of the Euler-Lagrange equations.

In the recent paper \cite{Ib15} the multisymplectic approach to first order Hamiltonian field theories on manifolds with boundary is discussed and various examples are treated, among them Yang-Mills theories.    It is natural to consider Palatini's gravity as a first order Hamiltonian field theory and, in a precise sense as a `topological phase' of a gauge theory, that is, as a limit of a gauge theory where the kinetic term of the theory disappears.   Moreover, a constraint in the momenta fields of the theory must be introduced to recover the equations of motion of Palatini's gravity, such constraint relates the momenta to the vierbein fields $e_\mu$ of the standard treatment. 

Thus we will exhibit a simple instance of a similar program but using instead Hamiltonian dynamics considered as a covariant first order Hamiltonian formulation of field theories on manifolds with boundary in dimension $1+0$ as discussed in \cite{Ib15}.   This theory provides a multisymplectic background to the quantum perturbative programme  discussed in \cite{Ca11,Ca14}.  The role of a covariant phase space for a first order Hamiltonian theory modelled on the affine dual space of the first jet bundle of the bundle defining the fields of the theory is assessed and the crucial role played by boundaries as determining symplectic spaces of fields defining the classical counterpart of the quantum states of the theory is stressed in accordance with the point of view expressed in \cite{Sc51}.    The fact that the basic space of the theory is one-dimensional and that the Euler-Lagrange equations of the theory are the standard Hamilton's equations, removes most of the analytical difficulties arising in higher-order theories. It is possible to show that, under appropriate conditions, the spaces of fields have specific geometric structures.  Most notably, it will be shown that for theories named locally Dirichlet, the space of solutions of the Euler-Lagrange equations restricted to the boundary is a Lagrangian submanifold and a locally defined generating function will be constructed.     

Section \ref{sec:general} will be devoted to discuss the formulation of Hamiltonian dynamics as a first order Hamiltonian field theory on a manifold with boundary and the main ingredients of the theory will be established: the covariant phase space, the symplectic structure on the space of fields at the boundary, the action and Hamilton's variational principle and the Euler-Lagrange equations of the theory.   In Section \ref{sec:boundary} the problem of determining under what conditions the space of solutions of Euler-Lagrange equations at the boundary is Lagrangian will be addressed.   A solution will be provided by exhibiting the construction of locally defined generating functions and a number of examples will be discussed.   Finally, in Section \ref{sec:constraints} the introduction of constraints in the theory will be analyzed.   In particular constraints affecting the momenta of the theory will be considered and a reduction theory for them will be obtained.

%%%%%%%%%%%%%%%%%%%%%%%%%
%%%%%%%%%%%%%%%%%%%%%%%%%
%%%%%%%%%%%%%%%%%%%%%%%%%

\section{The geometry of the covariant phase space for Hamiltonian dynamics}\label{sec:general}

  We will quickly review the basic notions and notations for first order covariant Hamiltonian field theories. We will stick to the notations and definitions in \cite{Ib15},for more background details see also \cite{Ca91}. Using Eqs.(3) and(9)we will then be able to prove Theorem 2.1, for all first order covariant Hamiltonian theories.

\subsection{The covariant phase space of first order Hamiltonian field theories}\label{sec:covariant}

The fundamental geometrical structure of a first order covariant Hamiltonian field theory will be provided by a fiber bundle $\pi \colon E \to M$ with $M$ a $m = (1+d)$-dimensional orientable smooth manifold with smooth boundary $\partial M \neq \emptyset$ and local coordinates adapted to the fibration $(x^\mu, u^a)$, $a= 1, \ldots, r$, where $r$ is the dimension of the standard fiber.     Because $M$ is orientable we will assume that a given volume form $\mathrm{vol}_M$ is selected.  Notice that it is always possible to chose local coordinates $x^\mu$ such that $\mathrm{vol}_M = dx^0 \wedge dx^1 \wedge \cdots \wedge dx^d$.   

In this setting, the formalism for Hamiltonian dynamics as a covariant Hamiltonian field theory, will be provided by  the bundle $\pi \colon E = Q \times [0,1] \to [0,1]$, where $\pi$ denotes the standard projection on the second factor, $Q$ is an $r$-dimensional smooth manifold without boundary,  and the spacetime of dimension $m = 1+0$, $M = [0,1] \subset \mathbb{R}$, with standard metric $\eta = -d t^2$.  The volume form is given by $\mathrm{vol}_M = dt$.  A standard bundle chart will have the form $(t,u^a)$ where $u^a$, $a = 1, \ldots, r$,  is a local chart for $Q$.   

We will denote by $\pi_1^0\colon J^1E \to E$ the affine 1-jet bundle of the bundle $E\stackrel{\pi}{\rightarrow}M$.
The elements of $J^1E$ are equivalence classes of germs of sections $\phi$ of $\pi$.   The bundle $J^1E$ is an affine bundle over $E$ modeled on the vector bundle $VE \otimes_E \pi^* (T^*M)$.   If $(x^\mu; u^a)$, $\mu = 0, \ldots, d$ is a bundle chart for the bundle $\pi\colon E\to M$, then we will denote by $(x^\mu,u^a; u^a_\mu)$ a local chart for the jet bundle $J^1E$.  By identifying the space of affine functions on each fiber of $J^1E$  with a  space of $m$-forms we can regard $J^1E^*$ as a quotient of a bundle of $m$-forms over $E$.

The affine dual bundle $J^1E^*$ of $J^1E$ is defined as the canonical affine bundle over $E$ modeled on $\pi^* (TM) \otimes_E VE^* $ with local coordinates $(x^\mu,u^a; \rho_a^\mu)$ and canonical projection $\tau_1^0 \colon J^1E^* \to E$ consisting on affine maps on $J^1E \to E$.   In the previous instance it is clear that the  first jet bundle $J^1E$ is canonically isomorphic to $TQ\times [0,1]$ and its affine dual $J^1E^*$ is isomorphic to $T^*Q\times [0,1]$.   Bundle coordinates on $T^*Q\times [0,1]$ will be denoted as $(t, u^a, p_a)$ and the projection $\tau_1^0	\colon J^1E^* \to E$ above, becomes $\tau_1^0(t,u^a, p_a) = (t,u^a)$.

Let $\Lambda_1^m (E)$ be the subbundle of $\Lambda^mE$ of $m$--forms on $E$ consisting of those $m$--forms which vanish when two of their arguments are vertical, then we have a short exact sequence of affine bundles:
$$0\rightarrow\textstyle{\bigwedge}^m_0E\hookrightarrow
	\textstyle{\bigwedge}^m_1E\rightarrow J^1E^*\rightarrow0.  $$
Here $\textstyle{\bigwedge}_0^mE$ is the bundle of basic $m$-forms on $E$, hence the bundle $\textstyle{\bigwedge}_1^mE$ is an affine real line bundle over $J^1E^*$.    Given coordinates $(x^\mu,u^a)$ for $E$ we have coordinates $(x^\mu,u^a,\rho, \rho^\mu_a)$ on $\bigwedge^m_1E$ adapted to them. 	The point $\varpi\in\bigwedge^m_1E$ with coordinates $(x^\mu,u^a;\rho, \rho^\mu_a )$ is the $m$-covector
$$
\varpi =  \rho^\mu_a\, d u^a\wedge \mathrm{vol}_\mu + \rho \, \mathrm{vol}_M \, ,
$$
where $\mathrm{vol}_\mu$ stands for $i_{\partial/\partial x^\mu} \mathrm{vol}_M$.

The bundle $\textstyle{\bigwedge}_1^mE$ carries a canonical $m$--form  which may be defined by a generalization of the definition of the canonical 1--form on the cotangent bundle of a manifold.  Let $\sigma \colon \textstyle{\bigwedge}_1^mE \to E$  be the canonical projection, then the canonical $m$-form $\Theta$ is defined by 
$$
\Theta_\varpi(U_1,U_2,\ldots,U_m)=\varpi(\sigma_*U_1, \ldots, \sigma_*U_m) \, ,
$$
where $\varpi\in\bigwedge^m_1E$ and $U_i\in T_\varpi(\bigwedge^m_1E)$. The closed $(m+1)$-form $\d\Theta$  defines a multisymplectic structure on $\bigwedge^m_1E$.   We shall denote the projection $\bigwedge^m_1E \to E$ by $\nu$, while the projection $\bigwedge^m_1E \to J_1E^*$ will be denoted by $\mu$.  Thus $\nu = \tau^0_1\circ\mu$, with $\tau_1^0 \colon J^1E^* \to E$ the canonical projection.

Notice that the space of 1-semibasic forms on $E=Q\times [0,1] \to [0,1]$ is just the space of 1-forms on $Q\times [0,1]$, that is $\bigwedge^m_1E = T^*(Q\times [0,1])$ and the projection map $\mu \colon \bigwedge^m_1E \to J^1E^*$ is just the projection $\mu \colon T^*(Q\times [0,1]) \to T^*Q\times [0,1]$, given as $\mu(t,u^a; p_0,p_a) = (t,u^a,p_a)$ where $p_adu^a + p_0 dt$ is a generic element in $T^*(Q\times [0,1])$.  The canonical 1-form $\Theta$ on $(\bigwedge^m_1E) \cong T^*(Q\times [0,1])$ is just the canonical Liouville 1-form on the cotangent bundle $T^*(Q\times [0,1])$, and has the expression:
$$
\Theta = p_a du^a + p_0 dt \, .
$$		
	
We define a Hamiltonian $H$ on $J^1E^*$ to be a section of $\mu$, and we can use a Hamiltonian section $\rho = -H(x^\mu, u^a, \rho_a^\mu )$ to define an $m$-form on $J^1E^*$
by pulling back the canonical $m$-form $\Theta$ from $\bigwedge^m_1E$.   We call the form so obtained the Hamiltonian $m$-form associated with $H$ and denote it by $\Theta_H$; thus
\begin{equation}\label{ThetaH}
	\Theta_H = \rho_a^\mu d u^a \wedge \mathrm{vol}_\mu - H(x^\mu,u^a, \rho_a^\mu) \, \mathrm{vol}_M .
\end{equation}
and the pair $(J^1E^*, \Theta_H)$ will be called the Hamiltonian covariant phase space of the theory.

In the simple example in dimension $1+0$ we are considering, a Hamiltonian, i.e., a section of $\mu$, can be identified with a map $H\colon T^*Q \times I\to \mathbb{R}$, and $p_0 = - H(t,u^a, p_a)$, because the bundle defined by the projection $\mu$ is trivial.  The pull-back of the canoncial 1-form $\Theta$ to $J^1E^* \cong T^*Q \times [0,1]$ becomes the standard 1-form of Hamiltonian dynamics:
$$
\Theta_H = p_a du^a - H(t,u,p) dt \, .
$$

%%%%%%%%%%%%%%%%
%%%%%%%%%%%%%%%%

\subsection{The action and the variational principle}\label{sec:variational}
	
The fields of the theory are double sections of the bundle $J^1E^*$, that is, sections $\chi\colon M \to J^1 E^*$ of the bundle structure over $M$ such that both $\Phi = \tau_1^0 \circ \chi \colon M \to E$ and $P= \chi\circ \pi \colon E \to J^1E^*$ are sections of $\pi \colon E\to M$	and $\tau_1^0 \colon J^1E^* \to E$ respectively.  Notice that in such a case $P\circ \Phi = \chi$.   We will denote such a section $\chi$ by $(\Phi, P)$ to indicate the double bundle structure  of $J^1E^*$.
 
In the case of Hamiltonian dynamics viewed as a field theory over $M = [0,1]\subset \mathrm{R}$, double sections of the bundle $J^1E^* = T^*Q \times [0,1] \to E=Q\times [0,1] \to [0,1]$, have the form $\chi = P \circ \Phi$, with $\Phi (t) = (u(t),t)$ and $P(u,p,t) = (u,p(t),t)$, i.e., therefore $\chi (t) = (u(t), p(t), t)$.  Thus the space of fields of the theory can be described equivalently as the space of smooth curves on $T^*Q$.

We will denote by $\mathcal{F}_M(E)$, or just $\mathcal{F}_M$ if there is no risk of confusion, the sections $\Gamma (E)$ of the bundle $E$, that is $\Phi \in \mathcal{F}_M$, and by $J^1\mathcal{F}_M^*$ the double sections $\chi = (\Phi, P)$.  Thus $J^1\mathcal{F}_M^*$  represents the space of fields of the theory in the first order covariant Hamiltonian formalism.  In our particular instance $\mathcal{F}_M \cong C^\infty([0,1],Q)$ and $J^1\mathcal{F}_M^* \cong C^\infty([0,1], T^*Q)$. 

The equations of motion of the theory will be defined by means of a variational principle, i.e., they will be characterized as the critical points of an action functional $S$ on $J^1\mathcal{F}_M^*$. Such action will be simply given by:
\begin{equation}\label{action}
 S(\chi ) = \int_M \chi^*\Theta_H \, ,
 \end{equation}
that in our case of Hamiltonian dynamics viewed as a field theory on $M=[0,1]\subset \mathrm{R}$ becomes,
\begin{equation}\label{S_mechanics}
S(\chi) = \int_0^1 \left( p_a(t) \dot{u}^a(t) - H(t,u(t),p(t))  \right) \, dt \, ,
\end{equation}
which is just the standard functional in Hamilton's variational principle.

A simple computation (see \cite{Ib15} for details) leads us to:
\begin{equation}\label{dSfirst}
\mathrm{d} S (\chi) (U) = \int_M \chi^* \left(i_{\widetilde U} d\Theta_H \right) + \int_{\partial M} (\chi\circ i)^* \left(i_{\widetilde U} \Theta_H \right) \, , 
\end{equation}
where $U$ is a vector field on $J^1E^*$ along the section $\chi$, $\widetilde{U}$ is any extension of $U$ to a tubular neighborhood of the graph of $\chi$, and $i\colon \partial M \to M$ is the canonical embedding.

%%%%%%%%%%%%%%%%%%
%%%%%%%%%%%%%%%%%%

\subsection{The cotangent bundle of fields at the boundary}\label{sec:cotangent_boundary}

Consider a collar around the boundary $U_\epsilon \cong (-\epsilon, 0] \times \partial M$, and local coordinates  $x^0 = t$, $x^k$, $k = 1, \ldots, d$, such that $\mathrm{vol}_M = dt \wedge \mathrm{vol}_{\partial M}$.  In the theory we are considering, we have $\partial M = \partial [0,1] = \{1,0 \}$. 

The fields at the boundary are obtained by restricting the zeroth component of sections $\chi$ to $\partial M$, that is, fields of the form:
$$
\varphi^a = \Phi^a \circ i \, , \qquad p_a = P_a^0 \circ i \, .
$$
 In the case of Hamiltonian dynamics viewed as a field theory on $M =[0,1]\subset\mathrm{R}$ the fields at the boundary are just $\varphi^a = u^a\mid_{\{1,0\}} = (u^a(1),u^a(0))$ and $p_a = P_a^0\mid_{\{1,0\}} = (p_a(1),p_a(0))$.

If we denote by $\mathcal{F}_{\partial M}$ the space of fields at the boundary $\varphi^a$, then the space of fields $(\varphi^a, p_a)$ can be identified with the contangent bundle $T^*\mathcal{F}_{\partial M}$ over $\mathcal{F}_{\partial M}$ in a natural way, i.e., each field $p_a$ can be considered as the covector at $\varphi^a$ that maps the tangent vector $\delta\varphi^a$ at $\varphi^a$ into the number $\langle p, \delta \varphi \rangle$ given by:
\begin{equation}\label{pairing_cotangent}
\langle p, \delta \varphi \rangle = \int_{\partial M} p_a(x)\delta\varphi^a (x) \mathrm{vol}_{\partial M} \, .
\end{equation}
The canonical 1-form $\alpha$ on $T^*\mathcal{F}_{\partial M}$ will have the expression:
\begin{equation}\label{alpha}
\alpha_{(\varphi, p)} (U) = \int_{\partial M} p_a (x) \delta\varphi^a (x) \, \mathrm{vol}_{\partial M}
\end{equation}
with $U$ a tangent vector to $T^*\mathcal{F}_{\partial M}$ at $(\varphi, p)$, that is, a vector field on the space of 1-semibasic forms on $i^*E$ along the section $(\varphi^a, p_a)$, hence $U = \delta\varphi^a \, \partial /\partial u^a + \delta p_a \, \partial /\partial \rho_a$.

In the case of Hamiltonian dynamics viewed as afield theory,  it is clear that $\mathcal{F}_{\partial M}$ is identified with $Q\times Q$ and $T^*\mathcal{F}_{\partial M}$ is just $T^*Q\times T^*Q$ and the canonical 1-form $\alpha$ will be:
$$
\alpha = \mathrm{pr}_1^* \theta - \mathrm{pr}_2^* \theta = p_a(1) du^a(1) - p_a(0)du^a(0) \, ,
$$
where $\theta$ is the canonical Liouville 1-form on $T^*Q$ and $\mathrm{pr}_{1,2}: T^*Q \times T^*Q	\to T^*Q$, denote the canonical projections onto the first and second factor respectively.
	
Finally, notice that the `pull-back to the boundary' map, defines a natural map from the space of fields in the bulk, $J^1\mathcal{F}_M^*$ into the phase space of fields at the boundary $T^*\mathcal{F}_{\partial M}$ that will be denoted by $\Pi$, that is: 
$$
\Pi \colon J^1\mathcal{F}_M^*\to T^*\mathcal{F}_{\partial M} \, , \qquad \Pi(\Phi, P) = (\varphi, p) , \, \quad  \varphi = \Phi\circ i, \, p_a = P_a^0\circ i \, .
$$
that in the case of Hamiltonian dynamics viewed as a field theory on $M = [0,1]\subset\mathrm{R}$ becomes $\Pi(\chi) = (u(1),p(1);u(0), p(0))$.

With the notations above, by comparing the expression for the boundary term in Eq. \eqref{dSfirst}, and the expression for the canonical 1-form $\alpha$, Eq. \eqref{alpha}, we get:
$$
\int_{\partial M} (\chi\circ i)^* \left(i_{\tilde U} \Theta_H\right) = (\Pi^*\alpha)_\chi (U) \, .
$$
or, in other words, the boundary term in Eq. \eqref{dSfirst} is just the pull-back of the canonical 1-form $\alpha$ at the boundary along the projection map $\Pi$.

%%%%%%%%%%%%%%%%%%
%%%%%%%%%%%%%%%%%%

\subsection{Euler-Lagrange's equations and Hamilton's equations}\label{sec:Euler}
We are ready now to study the contribution from the first term in $\d S$, Eq. \eqref{dSfirst}.
Notice that such a term can be 
thought of as a 1-form on the space of fields on the bulk, $J^1\mathcal{F}_M^*$.  We will call it the Euler-Lagrange 1-form and denote it by $\mathrm{EL}$, thus:
$$
\mathrm{EL}_\chi (U) = \int_M \chi^* \left(i_{\tilde U} d\Theta_H \right)  \, .
$$
A double section $\chi$ of $J^1E^* \to E \to M$ will be said to satisfy the Euler-Lagrange equations determined by the first order Hamiltonian field theory with Hamiltonian $H$ if $\mathrm{EL}_\chi = 0$, that is, $\chi$ is a
zero of the Euler-Lagrange 1-form $\mathrm{EL}$ on $J^1\mathcal{F}_M^*$.   Notice that this is equivalent to 
\begin{equation}\label{formEL}
	\chi^*(i_{\tilde{U}} \d\Theta_H)=0 \, ,
\end{equation}
for all vector fields $\tilde{U}$ on (a tubular neighborhood of the range of $\chi$ in) $J^1E^*$. In the case of Hamiltonian dynamics viewed as a field theory on a manifold of dimension $1+0$ ,(7) becomes the standard Hamilton's equations for the Hamiltonian function $H$:
\begin{equation}\label{hamiltoneqs}
 \dot{u}^a = \frac{\partial H}{\partial p_a} \, , \qquad \dot{p}_a = - \frac{\partial H}{\partial u^a} 
\end{equation}
Similarly, the previous expression for $\mathrm{EL}_\chi$ can be written explicitly as:
\begin{equation}\label{ELform}
\mathrm{EL}_\chi (\delta u, \delta p) = \int_0^1 \left[ \left(  \dot{u}^a  - \frac{\partial H}{\partial p_a}  \right) \delta p_a  + \left( \dot{p}_a + \frac{\partial H}{\partial u^a}  \right) \delta u^a \right] \, dt \, .
\end{equation}

We have obtained in this way the fundamental formula that relates the differential of the action with a 1-form on the space of fields on the bulk manifold and a 1-form on the space of fields at the boundary.
\begin{equation}\label{fundamental}
\mathrm{d} S_\chi = \mathrm{EL}_\chi +  \Pi^* \alpha_\chi \, , \qquad \chi \in J^1\mathcal{F}_M^* \, . 
\end{equation}

We will denote by $\mathcal{EL}$ the set of solutions of the Euler-Lagrange equations of our theory. The set $\mathcal{EL}$ are the set of zeros of the 1-form EL and the zeros of a 1-form  form a submanifold under exactly the same conditions as do the zeros of a function.\\
We can easily prove the following with complete generality, for any first order covariant Hamiltonian field theory:\\
\begin{theorem}
Assuming that $\mathcal{EL}$ of our first order covariant Hamiltonian field theory is a submanifold of $J^1\mathcal{F}_M^*$,
the restriction $\Pi(\mathcal{EL})\subset T^*\mathcal{F}_{\partial M}$ of $\mathcal{EL}$ to the boundary $\partial M$ is an isotropic submanifold of $T^*\mathcal{F}_{\partial M}$.

\end{theorem}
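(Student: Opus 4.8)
The plan is to derive isotropy purely formally from the fundamental identity \eqref{fundamental}, $dS_\chi = \mathrm{EL}_\chi + \Pi^*\alpha_\chi$, together with the relation $d\circ d = 0$ and the naturality of the exterior derivative under pullback. Because the argument never uses the specific form of $\Theta_H$ or the dimension of $M$, it applies verbatim to any first order covariant Hamiltonian field theory, which is exactly the generality claimed. Concretely, I would regard $S$, $\mathrm{EL}$, $\alpha$ as differential forms on the (in general infinite-dimensional) manifolds $J^1\mathcal{F}_M^*$ and $T^*\mathcal{F}_{\partial M}$, and take the exterior derivative of both sides of \eqref{fundamental}. Since $dS$ is exact its differential vanishes, and since $\Pi^*$ commutes with $d$, this gives
\[
0 = d\,\mathrm{EL} + \Pi^*(d\alpha).
\]
Writing $\omega = d\alpha$ for the canonical symplectic form on the cotangent bundle $T^*\mathcal{F}_{\partial M}$ whose primitive is the 1-form $\alpha$ of Eq. \eqref{alpha}, this is the identity of 2-forms $\Pi^*\omega = -\,d\,\mathrm{EL}$ on $J^1\mathcal{F}_M^*$.

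Next I would restrict to $\mathcal{EL}$. Let $j\colon \mathcal{EL}\hookrightarrow J^1\mathcal{F}_M^*$ denote the inclusion, which is a smooth embedding by the standing hypothesis that $\mathcal{EL}$ is a submanifold. By definition $\mathcal{EL}$ is the zero locus of the 1-form $\mathrm{EL}$, so $\mathrm{EL}_\chi = 0$ as a covector for every $\chi\in\mathcal{EL}$; in particular $j^*\mathrm{EL}=0$. Applying $d$ and again using that it commutes with pullback yields $j^*(d\,\mathrm{EL}) = d(j^*\mathrm{EL}) = 0$. Feeding this into the identity of the previous paragraph gives $j^*\Pi^*\omega = -\,j^*(d\,\mathrm{EL}) = 0$, that is, $(\Pi\circ j)^*\omega = 0$.

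Finally I would unwind this equation into the statement of isotropy: for $\chi\in\mathcal{EL}$ and tangent vectors $V,W\in T_\chi\mathcal{EL}$, the vanishing $(\Pi\circ j)^*\omega=0$ reads $\omega\bigl(d\Pi(V),d\Pi(W)\bigr)=0$, and since every tangent vector to the image $\Pi(\mathcal{EL})$ at $\Pi(\chi)$ arises as $d\Pi(V)$ for some $V\in T_\chi\mathcal{EL}$, this is precisely the assertion that $\omega$ restricts to zero on the tangent space of $\Pi(\mathcal{EL})$, i.e. $\Pi(\mathcal{EL})$ is isotropic. I expect the only genuine obstacle to be the infinite-dimensional differential-geometric bookkeeping rather than the algebra: one must guarantee that $\mathcal{EL}$ and its image are honest (e.g. Fréchet or tame) submanifolds with well-behaved tangent spaces, so that the pullback manipulations are legitimate and the passage from $(\Pi\circ j)^*\omega=0$ to the vanishing of $\omega$ on $T\Pi(\mathcal{EL})$ is valid; this is presumably why the submanifold property is imposed as a hypothesis. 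The formal core of the proof, resting entirely on $d^2S=0$, is otherwise immediate.
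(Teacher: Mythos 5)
Your proposal is correct and follows essentially the same route as the paper: differentiate the fundamental identity $dS = \mathrm{EL} + \Pi^*\alpha$ to get $d\,\mathrm{EL} = -\Pi^*\omega$, pull back to $\mathcal{EL}$ where $\mathrm{EL}$ (hence $d\,\mathrm{EL}$ after restriction) vanishes, and conclude that $\omega$ annihilates $T\Pi(\mathcal{EL})$. If anything you are slightly more careful than the paper in justifying the final step via surjectivity of $d\Pi$ onto the tangent spaces of the image, where the paper simply invokes that $\Pi$ is a submersion.
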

\begin{proof}
We pointed out earlier that $\mathcal{EL}$ are the zeros of the 1-form $EL$ in Eq. (\ref{fundamental}). Taking the differential of both sides of that equation we obtain 
 $0 = \mathrm{d}EL +  \mathrm{d}\Pi^* \alpha$, and therefore $\mathrm{d}EL = -\mathrm{d}\Pi^*\alpha = -\Pi^*\mathrm{d}\alpha$.
 
Taking the pull-back of this last equation to the space $\mathcal{EL}$, we obtain $0=-(\Pi^*\mathrm{d}\alpha)|_{\mathcal{EL}}$ and since $\Pi$ is a submersion this implies that $\mathrm{d}\alpha|_{\Pi(EL)}=0$.
We recall from the discussion preceding Eq. (\ref{pairing_cotangent}) that $\omega = \mathrm{d}\alpha$ is the canonical symplectic form on $T^*\mathcal{F}_{\partial M}$. Thus from the previous line, $ \omega|_{\Pi(EL)}=0$ and so $\Pi(\mathcal{EL})$ is an isotropic submanifold of $T^*\mathcal{F}_{\partial M}$.
\end{proof}

%%%%%%%%%%%%%%%
%%%%%%%%%%%%%%%

\subsection{Hamilton's variational principle}\label{sec:ham_variational}

A precise statement of Hamilton's Principle can be given as follows:\\
  
The trajectories of a classical dynamical system with configuration space $Q$, Hamiltonian function $H \colon T^*Q \times [0,1] \to \mathbb{R}$ and endpoints $u_0,u_1 \in Q$  are the the critical points of the action functional $S$, Eq. (\ref{action}), restricted to the space of curves in $T^*Q$ with fixed endpoints $u_0,u_1$.\\

In other words, let $u_0,u_1 \in Q$ be two points in the configuration space of the system. Consider now the space of maps:
$$
\Omega_{u_0,u_1} = \{\chi = (u,p) \colon [0,1] \to T^*Q \mid   \chi (0) = u_0, u(1)= u_1\} = \Pi^{-1}(T_{u_0}^*Q \times T_{u_1}^*Q) \, ,
$$
This is the space of all curves $\chi : [0,1]\rightarrow T^*Q, \chi(t)=(u(t),p(t))$ with local endpoints $u_0,u_1 : u(0)=u_0, u(1)=u_1$ and free values for the momenta endpoints.Then the trajectories of the system are the critical points of $S$ restricted to $\Omega_{u_0,u_1}$, i.e., the zeros of $\mathrm{d} (S\mid_{\Omega_{u_0,u_1}})$.  

Notice that because $\Pi$ is a submersion, $\Omega_{u_0,u_1}$ is a regular submanifold.  Moreover 
$T_{u_0}^*Q \times T_{u_1}^*Q$ is a Lagrangian submanifold of $T^*\mathcal{F}_{\partial M} = T^*Q \times T^*Q$ such that, not only $d\alpha$ but $\alpha$ vanishes on it.  Then, by Eq.(9) we obtain,
\begin{equation}\label{dSOmega}
\mathrm{d} (S\mid_{\Omega_{u_0,u_1}})_\chi = \mathrm{EL}_\chi \, .
\end{equation}
Thus trajectories of the system are solutions of Euler-Lagrange equations with the given boundary conditions.     Thus the space of solutions of Euler-Lagrange equations $\mathcal{EL}$ is the union of the spaces of solutions of Hamilton's equations when we let $u_0$, $u_1$ free.

If we consider now, instead of the Lagrangian submanifold $T_{u_0}^*Q \times T_{u_1}^*Q$ an arbitrary Lagrangian submanifold $\mathcal{L}\subset  T^*Q \times T^*Q$, we may modify Hamilton's principle and state that the trajectories of the classical dynamical system with Hamiltonian function $H \colon T^*Q \times [0,1] \to \mathbb{R}$ are the the critical points $\chi$ of the action functional $S$, Eq. (\ref{S_mechanics}), such that $\Pi(\chi) \in \mathcal{L}$.   If we denote now by 
$\Omega_{\mathcal{L}}$ the space of curves such that their endpoints lie in $\mathcal{L}$, i.e., 
$$
\Omega_{\mathcal{L}} = \Pi^{-1}(\mathcal{L}) \, ,
$$
then because $\mathcal{L}$ is Lagrangian, $\alpha\mid_{\mathcal{L}}$ is closed, thus, locally there exists a function $F$ on $\mathcal{L}$ such that $\alpha\mid_{\mathcal{L}} = \mathrm{d} F$.  Hence, instead of Eq. (\ref{dSOmega}), we get:
$$
\mathrm{d} (S\mid_{\Omega_{\mathcal{L}}})_\chi = \mathrm{EL}_\chi  + \Pi^* (\mathrm{d} F)_\chi \, ,
$$
and now, critical points of the action restricted to $\Omega_{\mathcal{L}}$ are zeros of the modified Euler-Lagrange form $\mathrm{EL} + \Pi^* (\mathrm{d}F)$.

%%%%%%%%%%%%%%%%%%%%%%%%%%
%%%%%%%%%%%%%%%%%%%%%%%%%%
%%%%%%%%%%%%%%%%%%%%%%%%%%

\section{Hamilton's generating function}\label{sec:boundary}

In the previous section we have shown  that for regular theories the space $\Pi (\mathcal{EL})$ is an isotropic submanifold of the space of fields at the boundary or, in terms of the particular instance of Hamiltonian dynamics, that the trace at the boundary of actual solutions of Hamilton's equations in the full interval determines an isotropic submanifold in $T^*Q \times T^*Q$.   In many instances it is true that such submanifold is maximal, i.e, it is a Lagrangian submanifold.   In this section we will discuss two different approaches to reach this conclussion and, along the way, we will establish sufficient conditions that will guarantee the desired result. 

%%%%%%%%%%%%%%%%%%%%%%%%%%
%%%%%%%%%%%%%%%%%%%%%%%%%%

\subsection{Hamilton's generating function}\label{sec:Hamilton_function}

Now we are ready to explore under which circumstances $\Pi (\mathcal{EL})$ is Lagrangian. 

\begin{theorem}
Consider Hamiltonian dynamics as a Hamiltonian field theoriy on a manifold M of dimension 1+0 with boundary. If the flow of the dynamics given by Hamilton's equations (\ref{hamiltoneqs}) exists for all $t \in[0,1]$ then $\Pi(\mathcal{EL})$ is a Lagrangian submanifold of $J^1\mathcal{F}_{\partial M}$.
\end{theorem}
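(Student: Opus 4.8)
The plan is to exploit the hypothesis that the flow exists on all of $[0,1]$ to parametrize the solution set $\mathcal{EL}$ by its initial conditions, and then to recognize $\Pi(\mathcal{EL})$ as the graph of the time-one flow map, which is a symplectomorphism of $T^*Q$. Since Theorem~2.1 already tells us that $\Pi(\mathcal{EL})$ is isotropic, the only thing left to establish is that it is \emph{maximal}, i.e.\ that it has dimension $2r=\dim T^*Q$ (where $r=\dim Q$) inside $T^*\mathcal{F}_{\partial M}\cong T^*Q\times T^*Q$, which has dimension $4r$. Equivalently, I want to show $\Pi(\mathcal{EL})$ is the graph of a diffeomorphism.

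First I would fix notation: let $\phi_t\colon T^*Q\to T^*Q$ be the (in general time-dependent) flow of the Hamiltonian vector field determined by Hamilton's equations \eqref{hamiltoneqs}, sending an initial datum $z_0=(u(0),p(0))$ to the value $z_t=(u(t),p(t))$ of the corresponding solution. The hypothesis that this flow exists for all $t\in[0,1]$ says precisely that every solution of \eqref{hamiltoneqs} is defined on the whole interval and is uniquely determined by $z_0\in T^*Q$. Hence $z_0\mapsto\chi$, with $\chi(t)=\phi_t(z_0)$, is a bijection from $T^*Q$ onto $\mathcal{EL}$, so $\mathcal{EL}$ is a smooth manifold diffeomorphic to $T^*Q$ and $\dim\mathcal{EL}=2r$. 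Computing the image under $\Pi$, and recalling $\Pi(\chi)=(u(1),p(1);u(0),p(0))$, I obtain
\begin{equation*}
\Pi(\mathcal{EL})=\{(\phi_1(z_0),z_0):z_0\in T^*Q\}\subset T^*Q\times T^*Q,
\end{equation*}
which is exactly the graph of the time-one map $\phi_1$. Because the flow can be run backwards, $\phi_1$ is a diffeomorphism, so the embedding $\iota\colon T^*Q\to T^*Q\times T^*Q$, $\iota(z)=(\phi_1(z),z)$, has an embedded image of dimension $2r$, i.e.\ half of $\dim(T^*Q\times T^*Q)$.

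Finally, recalling that $\omega=\mathrm{d}\alpha=\mathrm{pr}_1^*\omega_Q-\mathrm{pr}_2^*\omega_Q$ with $\omega_Q=\mathrm{d}\theta$ the canonical form on $T^*Q$, and noting $\mathrm{pr}_1\circ\iota=\phi_1$, $\mathrm{pr}_2\circ\iota=\mathrm{id}$, I would pull back:
\begin{equation*}
\iota^*\omega=\phi_1^*\omega_Q-\omega_Q.
\end{equation*}
The flow of a (possibly time-dependent) Hamiltonian vector field preserves the symplectic form, since at each fixed time $\mathcal{L}_{X_H}\omega_Q=\mathrm{d}\,i_{X_H}\omega_Q=\mathrm{d}(\mathrm{d}H)=0$, whence $\phi_1^*\omega_Q=\omega_Q$ and $\iota^*\omega=0$. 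Thus $\Pi(\mathcal{EL})$ is an isotropic submanifold of half the dimension of $T^*Q\times T^*Q$, i.e.\ Lagrangian. (Alternatively, one takes isotropy directly from Theorem~2.1 and supplies only the dimension count.)

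The place where the hypothesis does the real work is the maximality statement: isotropy comes for free, but a maximal isotropic image requires that $\phi_1$ be a genuine diffeomorphism, which fails in general exactly when trajectories escape the interval or the endpoint map degenerates. The assumed global existence of the flow is precisely what rules this out. I would close with a remark linking this to the section title: wherever the projection of $\Pi(\mathcal{EL})$ onto the base $Q\times Q$ is a local diffeomorphism, $\alpha\mid_{\Pi(\mathcal{EL})}$ is exact and its local potential is Hamilton's principal function $S(u_0,u_1)$, the value of the action along the trajectory joining the endpoints, recovering the classical type-one generating function.
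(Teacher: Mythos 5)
Your proposal is correct and follows essentially the same route as the paper: identify $\Pi(\mathcal{EL})$ with the graph of the time-one flow map $\varphi_1$, note that $\varphi_1$ is a symplectomorphism because the Hamiltonian flow preserves the canonical form, and conclude that the graph is Lagrangian for the difference form $\mathrm{pr}_1^*\omega_Q-\mathrm{pr}_2^*\omega_Q$. The only difference is cosmetic: you spell out the half-dimension count and the pullback computation $\iota^*\omega=\varphi_1^*\omega_Q-\omega_Q=0$ explicitly, where the paper simply invokes the standard fact that a diffeomorphism is symplectic iff its graph is Lagrangian.
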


\begin{proof}  Consider the Hamiltonian vector field $X_H$ defined by the Hamiltonian function $H$ and its local flow $\varphi_t$.   Now suppose that the flow is globally defined $\varphi_t \colon T^*Q \to T^*Q$ for all $t\in [0,1]$.  This implies that the integral curves $\chi(t) = (u(t),p(t))$ of Hamilton's equations are defined for all $t$ and for all initial data $(u_0,p_0)\in T^*Q$.   Because of uniqueness and regular dependence on initial conditions of solutions of ode's the map $\varphi_t$ is bijective and smooth.  Moreover because $X_H$ is Hamiltonian, its flow consists of symplectic diffeomorphisms,i.e. symplectomorphisms. Therefore $\varphi_1\colon T^*Q \to T^*Q$ is a symplectictomorphism.   We conclude by noticing that 
$$
\mathrm{graph} (\varphi_1) = \Pi (\mathcal{EL}) \, ,
$$
where $\mathrm{graph} (\varphi_1) = \{ (u_0,p_0; u_1,p_1) \in T^*Q \times T^*Q \mid (u_1,p_1)  = \varphi_1(u_0,p_0) \}$.  Hence because the graph of a diffeomorphism between symplectic manifolds is symplectic iff its graph is a Lagrangian submanifold with respect to the difference of the pull-backs of the corresponding symplectic forms to the product manifold, we reach the conclusion, $\Pi(\mathcal{EL})$ is a Lagrangian submanifold of $T^*Q \times T^*Q$.
\end{proof}
Notice that the key assumption in the previous proof is that solutions to Hamilton's equations exist for all $t\in [0,1]$.  The existence theorem for solutions of ode's guarantees the existence of solutions for times small enough, but not necessarily larger than 1.  If the space where the equations are defined were compact without boundary then the flow would exist for all $t$, however $T^*Q$ is not compact.

The following is a familiar example of a Hamiltonian system which satisfies the conditions of Theorem 3.1. It therefore follows for this theory that $\Pi(\mathcal{EL})$ is Lagrangian:
\begin{example}\label{free}  Free particle.   In this case $Q = \mathbb{R}$ and the Hamiltonian is given by $H(u,p) = p^2/2m$.  Hamilton's equations of motion are $\dot{u} = p/m$, $\dot{p} = 0$, whose solutions have the form $p(t) = p_0 =$ constant, $u(t) = u_0 + (p_0/m) t$.  The flow $\varphi_t \colon T^*\mathbb{R} \to T^*\mathbb{R}$ of the Hamiltonian vector field $X_H = p/m \partial /\partial u$ is defined for all $t$:  
$$
(u(t), p(t)) = \varphi_t (u_0, p_0) = (u_0 + p_0/m t, p_0) \, . 
$$

The Lagrangian submanifold $\Pi(\EL) \subset T^*\mathbb{R} \times T^*\mathbb{R} \cong T^*(\mathbb{R} \times \mathbb{R})$ is just the plane $p_1 = p_0 = m(u_1 - u_0)$.
\end{example}
The following is an example of a Hamiltonian theory for which the flow under the Hamiltonian vector field is not defined for all $t\in[0,1]$. Therefore Theorem 3.1 cannot be applied to this theory to prove that $\Pi(\mathcal{EL})$ is a Lagrangian submanifold:

\begin{example}\label{quartic}  Quartic potential.   As in the previous example $Q = \mathbb{R}$ but the Hamiltonian is given by $H(u,p) = p^2 /2m - mu^4/4$.    Hamilton's equations of motion are now $\dot{u} = p/m$, $\dot{p} = mu^3$.   Notice that then $\ddot{u} = u^3$.   The general solution can be found easily by noticing that $H$ is a constant of the motion. Thus $p^2/2m - mu^4/4 = E =$ constant.   Then $\dot{u} = \pm \sqrt{\frac{2E}{m} + u^4/2}$ and the general solution $u(t)$ with initial data $u_0$ is given by the elliptic integral:
$$
t = \int_{u_0}^u  \frac{du}{\pm \sqrt{2E/m + u^4/2}} \, .
$$
Notice that for such solution $p_0 = m\dot{u}(0) = \pm \sqrt{2mE + m^2 u_0^4/2}$.

Choosing for instance $E = 0$, we get  $p_0 = m u_0^2/2$,  and $\dot{u} = \pm u^2/\sqrt{2}$, whose solutions are given by:
$$
u(t) = \frac{2u_0}{2 \mp u_0 t} \, .
$$
Clearly, if $2/u_0 < 1$, then the solution $u(t)$ explodes before reaching $t = 1$ and the flow $\varphi_t$ doesn't exist for all $t \in [0,1]$. 
\end{example} 
     
Theorem 3.1, though powerful, cannot be extended to higher-dimensional field theories because the notion of flow cannot be extended in a natural way beyond the $1+0$ dimensional theory.  There is however and alternative idea that can be used in any dimension.      
     
If the flow $\varphi_1$ exists globally, the Lagrangian submanifold   $\Pi(\mathcal{EL})\subset T^*Q\times  T^*Q$ is transverse to the projection onto the first factor of $T^*Q \times T^*Q$, however it doesn't have to be transverse to the canonical projection onto $Q\times Q$ obtained by identifying
$T^*Q \times T^*Q$ with $T^*(Q\times Q)$.If $\Pi(\mathcal{EL})$ were transverse to the canonical cotangent bundle projection $\pi_{Q\times Q}\colon T^*(Q\times Q) \to Q\times Q$, then it would define the graph of a closed 1-form on $Q\times Q$, also $\Pi(\mathcal{EL})$ would be transverse to the fibers of $\pi_{Q\times Q}·$ Hence if $\Pi(\mathcal{EL}) \cap \pi_{Q\times Q}^{-1}(u_0,u_1) \neq \emptyset$,  any $\chi \in \Pi(\mathcal{EL}) \cap \pi_{Q\times Q}^{-1}(u_0,u_1)$  will be a solution of Hamilton's equations and it will be defined for all $t \in [0,1]$ with endpoints $u_0$, $u_1$.   Moreover, in this case, because $\Pi(\mathcal{EL})$ will define the graph of a closed 1-form on $Q\times Q$, there will  exist a neighborhood $V$ of $(u_0,u_1) = \pi_{Q\times Q}(\chi)$ and a function $W$ defined on $V$ such that $\Pi(\mathcal{EL})\cap  \pi_{Q\times Q}^{-1}(V) = \mathrm{graph}(dW)$.Such local function is called a generating function of the Lagrangian submanifold.
   
Actually, to prove that a submanifold is Lagrangian it suffices to do it locally, then the picture described in the previous paragraph holds locally.  Moreover it suffices to assume that there exist solutions of Hamilton's equations on an open neighborhood of $(u_0,u_1)$ to guarantee it as shown in the following lemma.
   
\begin{lemma}\label{transversality}
Let $H\colon T^*Q \to \mathbb{R}$ be a Hamiltonian function.  Given two points $u_0$, $u_1$ in $Q$. Suppose there exists an open neighborhood $V$ of $(u_0,u_1)$ such that for any $(u_0', u_1') \in V$ there exist solutions of Hamilton's equations $\chi(t) = (u(t), p(t))$, $t \in [0,1]$, for which  $u(0) = u'_0$, $u(1)=u'_1$. Then there exists an open neighborhood $U$ of $(u_0,p_0) \in T^*Q$ with $p_0 = p(0)$ such that $\Pi(\mathcal{EL}_U)$ is transversal to the canonical projection $\pi_{Q\times Q}$, where $\mathcal{EL}_U)$ denotes the space of solutions of Hamilton's equations with initial data lying in $U$. 
\end{lemma}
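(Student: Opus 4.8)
The plan is to convert the \emph{boundary-value} hypothesis into a statement about an \emph{endpoint map} and then read off transversality as the nonvanishing of a single Jacobian block. First I would fix the solution $\chi$ with $\chi(0)=(u_0,p_0)$ and $u(1)=u_1$, which by hypothesis exists on all of $[0,1]$, and invoke the openness of the domain of the Hamiltonian flow together with smooth dependence on initial conditions: since $\chi$ survives up to $t=1$, there is a neighbourhood $U_0$ of $(u_0,p_0)$ on which the flow $\varphi_t\colon U_0\to T^*Q$ is defined and smooth for every $t\in[0,1]$. On $U_0$ each solution is determined by its initial datum, so $\mathcal{EL}_{U_0}\cong U_0$, the parametrisation $(u_0',p_0')\mapsto(u_0',p_0';\varphi_1(u_0',p_0'))$ is an immersion, and the restriction of $\pi_{Q\times Q}$ to $\Pi(\mathcal{EL}_{U_0})$ becomes the smooth \emph{endpoint map}
\begin{equation*}
F\colon U_0\to Q\times Q,\qquad F(u_0',p_0')=\bigl(u_0',\,\pi_Q(\varphi_1(u_0',p_0'))\bigr),
\end{equation*}
where $\pi_Q\colon T^*Q\to Q$ is the cotangent projection.

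Next I would reformulate the conclusion. Since $\Pi(\mathcal{EL}_{U_0})$ is parametrised by the $2r$-dimensional set $U_0$ and $Q\times Q$ is also $2r$-dimensional, transversality of $\Pi(\mathcal{EL}_U)$ to the fibration $\pi_{Q\times Q}$ over $U\subseteq U_0$ is equivalent to $F$ being a local diffeomorphism on $U$, i.e. to $\mathrm{d}F$ being invertible there. Writing $u_1=\pi_Q(\varphi_1(u_0',p_0'))$, the differential has the block-triangular form $\mathrm{d}F=\begin{pmatrix} I & 0\\ \partial u_1/\partial u_0' & \partial u_1/\partial p_0'\end{pmatrix}$, so $\det\mathrm{d}F=\det(\partial u_1/\partial p_0')$ and the whole question reduces to the invertibility of the block $\partial u_1/\partial p_0'$, which is precisely the statement that the endpoints are \emph{not conjugate} along the solution at time $1$.

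Then the solvability hypothesis enters through Sard's theorem. The hypothesis says the endpoint map is surjective onto the open set $V$; since the critical values of a smooth map form a set of measure zero, $V$ must contain a regular value, realised by a solution whose initial datum lies in $U_0$. At such a point $\mathrm{d}F$ is an isomorphism, the inverse function theorem makes $F$ a local diffeomorphism on a neighbourhood $U$, and Step~2 converts this into the desired transversality of $\Pi(\mathcal{EL}_U)$ to $\pi_{Q\times Q}$; the local generating function $W$ with $\Pi(\mathcal{EL})\cap\pi_{Q\times Q}^{-1}(V)=\mathrm{graph}(\mathrm{d}W)$ then follows from the discussion preceding the lemma.

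The main obstacle is exactly this last passage: mere solvability of the boundary-value problem for every target in $V$ does not by itself force the Jacobian block $\partial u_1/\partial p_0'$ to be nonsingular at the \emph{prescribed} datum $(u_0,p_0)$, since a conjugate point could occur there while \emph{other} solutions cover $V$. What Sard's theorem buys is nondegeneracy at a nearby regular value, and it is the genericity of regular values together with smoothness of the flow that makes transversality hold on a full neighbourhood $U$. I would therefore state the conclusion for this neighbourhood $U$, and if transversality at $(u_0,p_0)$ itself were wanted I would add the hypothesis that $u_0$ and $u_1$ are not conjugate along $\chi$, which makes the inverse function theorem applicable directly and dispenses with Sard's theorem altogether.
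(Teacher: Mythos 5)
Your opening step coincides with the paper's: fix the solution $\chi$ through $(u_0,p_0)$, use the openness of the domain of the flow and smooth dependence on initial conditions to obtain a neighbourhood $U_0$ on which $\varphi_1$ is defined, and observe that $\Pi(\mathcal{EL}_{U_0})$ is the graph of $\varphi_1|_{U_0}$, so that transversality to $\pi_{Q\times Q}$ amounts to the endpoint map $F$ being a local diffeomorphism. (The paper phrases this as $\pi_{Q\times Q}|_{\Pi(\mathcal{EL}_U)}$ being a submersion and argues by contradiction through openness of that restriction; your explicit block computation with $\det \mathrm{d}F=\det(\partial u_1/\partial p_0')$ is a cleaner way to say the same thing.) The divergence, and the genuine gap, lies in how the solvability hypothesis is exploited. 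Your Sard step needs $F(U_0)\supseteq V$, or at least that $F(U_0)$ has positive measure in $V$; but the hypothesis only asserts that \emph{some} solution joins each pair $(u_0',u_1')\in V$, and nothing forces those solutions to have initial data in $U_0$. So you cannot conclude that $V$ consists of values of $F|_{U_0}$, and the premise of the Sard argument is not established. (The paper's contradiction argument silently relies on the same unproved inclusion, but you inherit the difficulty rather than resolve it.)

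More importantly, even granting that surjectivity, Sard's theorem hands you a regular value $(u_0'',u_1'')\in V$ and a neighbourhood of the \emph{corresponding} initial datum $(u_0'',p_0'')$, not of $(u_0,p_0)$. The lemma asserts transversality on a neighbourhood $U$ of the prescribed point $(u_0,p_0)$, and, as you yourself concede in your final paragraph, your argument does not deliver that: you end by proposing either to restate the conclusion for the Sard neighbourhood or to add the hypothesis that $u_0$ and $u_1$ are not conjugate along $\chi$. Either way you are proving a different statement from the one asked. The paper instead tries to get transversality at $(u_0,p_0)$ directly, by claiming that failure of the submersion condition would make $\pi_{Q\times Q}|_{\Pi(\mathcal{EL}_U)}$ non-open and hence miss a neighbourhood of $(u_0,u_1)$ --- an implication that is itself delicate, since a smooth map between equidimensional manifolds can be open at a critical point, as $z\mapsto z^2$ shows. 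Your diagnosis that some non-degeneracy at $(u_0,p_0)$ is what is really needed is therefore a fair criticism of the lemma, but as a proof of the lemma as stated the proposal falls short.
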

    
\begin{proof}   Consider a solution of Hamilton's equations $\chi (t) = (u(t), p(t))$ such that $u(0) = u_0$ and $u(1) = u_1$.   Then if we denote by $p_0 = p(0) \in T^*_{u_0}Q$, clearly there exists a flow box  around the point $(u_0,p_0)$ such that the flow $\varphi_t$ of the Hamiltonian vector field $X_H$ defined by the Hamiltonian function $H$ is defined for all $t \in [0, 1+\epsilon)$.  Notice that because of the existence and uniqueness theorem for initial value problems for ode's we can extend the solution $\chi(t)$ arriving at time 1 to $u_1$ for some time $0< t < 2\epsilon$, then because of the regular dependence of solutions on initial conditions, we can choose an open neighborhood $U$ of $(u_0,p_0)$ small enough such that the flow for all points in $U$ is defined for all $t \in [0,1+\epsilon)$.   

Hence, because the flow $\varphi_t$ consists of local symplectic diffeomorphisms the manifold $\varphi_1(T_{u_0}^*Q)$ is a regular Lagrangian submanifold.    Moreover the space $\Pi (\mathcal{EL}_U)$ is transversal to  the canonical projection $\pi_{Q\times Q}$.   We argue by contradiction, if this were not so, then the projection $\pi_{Q\times Q}$ restricted to $\Pi (\mathcal{EL}_U)$ will not be a submersion.   But notice that $\Pi (\mathcal{EL}_U)$ is just the graph of the diffeomorphism $\varphi_1 \colon U \to T^*Q$, hence $\pi_{Q\times Q}$ restricted to $(\Pi (\mathcal{EL}_U))$ will not be open.  Then shrinking $U$ if necessary, this contradicts the existence of an neighborhood of $(u_0,u_1)$ all of whose points lie in $\pi_{Q\times Q}(\mid_{(\Pi (\mathcal{EL}_U))})$ . \end{proof}

\begin{definition}[Dirichlet's assumption]\label{def:dirichlet}  
We will say that a theory is Dirichlet if for any boundary data $\varphi$, there exists a unique solution of Euler-Lagrange equations $\chi = (\Phi, P) \in \mathcal{EL}$ such that $\Phi\mid_{\partial M} = \varphi$.
\end{definition}

Thus a covariant Hamiltonian dynamics with configuration space $Q$ will be Dirichlet, or satisfy Dirichlet's assumption, if for each pair $(u_0,u_1)\in Q\times Q$, there exists a unique solution $(u(t), p(t))$ on the interval $[0,1]$ of Hamilton's equations  such that $u(0) = u_0$ and $u(1) = u_1$.
Clearly, Examples \ref{quartic} and \ref{P_X} are not Dirichlet, but Example 3.2, the free particle, is Dirichlet.
For Dirichlet theories it is easy to prove the following theorem. 

\begin{theorem}\label{theor:lagrangian1} Suppose our Hamiltonian dynamical system is Dirichlet. The space \\
 $\Pi (\EL) \subset T^*Q \times T^*Q$ of solutions of Euler-Lagrange equations at the boundary  is a Lagrangian submanifold with generating function given by Hamilton's principal function $W$.
\end{theorem}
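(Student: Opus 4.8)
The plan is to use the Dirichlet assumption (Definition~\ref{def:dirichlet}) to realize $\Pi(\EL)$ explicitly as the graph of an exact $1$-form over $Q \times Q$, and then to identify the potential of that $1$-form with Hamilton's principal function by differentiating the action along solutions. Throughout I use the identification $T^*Q \times T^*Q \cong T^*(Q\times Q)$ under which the canonical $1$-form $\alpha$ of Subsection~\ref{sec:cotangent_boundary} reads $\alpha = p_a(1)\, du^a(1) - p_a(0)\, du^a(0)$; the sign in the time-$0$ slot means that the momentum conjugate to $u^a(0)$ is $-p_a(0)$, which must be carried through the bookkeeping.

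First I would construct the parametrizing map. By the Dirichlet assumption, for every $(u_0,u_1) \in Q \times Q$ there is a unique solution $\chi_{u_0,u_1} = (u(t),p(t))$ of Hamilton's equations \eqref{hamiltoneqs} on $[0,1]$ with $u(0)=u_0$, $u(1)=u_1$; set $\Sigma(u_0,u_1) := \chi_{u_0,u_1} \in \EL$. Because Dirichlet holds at every pair, the hypotheses of Lemma~\ref{transversality} are met at each $(u_0,u_1)$, so $\Pi(\EL)$ is everywhere transversal to the projection $\pi_{Q\times Q}$; together with the uniqueness half of the Dirichlet assumption this shows that $\Pi \circ \Sigma \colon Q\times Q \to T^*(Q\times Q)$ is a smooth section of $\pi_{Q\times Q}$ whose image is all of $\Pi(\EL)$. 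Hence $\Pi(\EL)$ is the graph of a globally defined $1$-form $\beta := \Pi\circ\Sigma$ on $Q\times Q$, and in particular is a submanifold of dimension $2r = \tfrac12 \dim T^*(Q\times Q)$.

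Next I would produce the generating function. Define Hamilton's principal function $W \colon Q\times Q \to \mathbb{R}$ by evaluating the action \eqref{action} along the unique connecting trajectory, $W(u_0,u_1) := S(\Sigma(u_0,u_1)) = \int_0^1 \left( p_a \dot u^a - H \right)\, dt$. Pulling back the fundamental formula \eqref{fundamental}, $\mathrm{d}S = \mathrm{EL} + \Pi^*\alpha$, along $\Sigma$ gives $\mathrm{d}W = \Sigma^*\,\mathrm{d}S = \Sigma^*\mathrm{EL} + (\Pi\circ\Sigma)^*\alpha$. Since $\Sigma$ takes values in $\EL$, on which the Euler--Lagrange $1$-form vanishes identically, the first term drops out, leaving $\mathrm{d}W = (\Pi\circ\Sigma)^*\alpha$. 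By the tautological property of the canonical $1$-form (a section $s$ of a cotangent bundle satisfies $s^*\alpha = s$) the right-hand side is exactly $\beta$. Therefore $\beta = \mathrm{d}W$ is exact, so $\Pi(\EL) = \mathrm{graph}(\mathrm{d}W)$ is a Lagrangian submanifold of $T^*Q \times T^*Q$ with global generating function $W$.

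The main obstacle is the second step, namely upgrading the set-theoretic bijection supplied by the Dirichlet assumption to a smooth section. Existence and uniqueness of solutions do not by themselves guarantee that $(u_0,u_1) \mapsto \chi_{u_0,u_1}$, equivalently the boundary momenta, depend smoothly on the data; this smoothness, and the fact that $\Pi(\EL)$ is a bona fide graph rather than merely a bijective image, is precisely what the transversality conclusion of Lemma~\ref{transversality} delivers. Once transversality and global bijectivity are in hand the remaining computation is the short pullback argument above, and the only real care needed is consistent tracking of the minus sign attached to the $t=0$ endpoint in $\alpha$.
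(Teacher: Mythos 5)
Your proof is correct and follows essentially the same route as the paper: you realize $\Pi(\EL)$ as a graph over $Q\times Q$ using the transversality Lemma~\ref{transversality} together with the uniqueness half of the Dirichlet assumption, and you exhibit Hamilton's principal function $W(u_0,u_1)=\int_0^1\chi^*\Theta_H$ as its generating function, exactly as the paper does. The paper's own proof of this theorem is only a two-line sketch (the detailed local version appears in the locally Dirichlet theorem), so your explicit derivation of $\mathrm{d}W=(\Pi\circ\Sigma)^*\alpha$ from the fundamental formula \eqref{fundamental}, and your attention to the smoothness of $(u_0,u_1)\mapsto\chi_{u_0,u_1}$ and to the sign of the $t=0$ momentum, simply fill in steps the paper leaves implicit.
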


For Dirichlet theories the proof is based on the construction of an explicit generating function for $\Pi(\mathcal{EL})$. In the case of Hamiltonian dynamics such generating function is just Hamilton's principal function defined as follows:
$$
W(u_0,u_1) = \int_0^1 \chi^*\Theta_H = \int_0^1 ( p_a(t) \dot{u}^a(t) - H(t,u(t),p(t)) dt \, ,
$$
where $\chi (t) = (u(t),p(t))$ is the unique solution to Hamilton's equations with end-points $u_0$, $u_1$.   The notion of Hamilton's principal function can be easily extended to any Dirichlet theory as follows:
\begin{equation}\label{HamW}
W(\varphi) = \int_M \chi^* \Theta_H \, ,
\end{equation}
where $\chi\in \mathcal{EL}$ denotes the unique solution with boundary data $\varphi$.

The Dirichlet condition is too strong, it implies that the generating function $W$ in Eq. (\ref{HamW}) is defined globally.   The following example of the planar pendulum does not have a unique solution but rather it has two solutions joining each pair of endpoints. Thus it is not Dirichlet and we cannot apply Theorem 3.6 to show that $\Pi(EL)$ is a Lagrangian submanifold of $T^*Q\times T^*Q$. 

\begin{example}\label{pendulum}  The planar pendulum.   Consider now the system with $Q = S^1$, $T^*Q \cong S^1 \times \mathbb{R} = \{ (\theta,p) \mid 0\leq \theta < 2\pi, p \in \mathbb{R}\}$, and Hamiltonian function $H(\theta,p)= p^2/2m - k \cos\theta$, where $m$ and $k$ are positive constants. The phase space is the cylinder over the circle $S^1$, then given two angles $\theta_0$, $\theta_1$, there are always two trajetories joining them with time 1.  Moreover they are separated in the space of curves $C^\infty([0,1], S^1\times \mathbb{R})$. 
\end{example}

We can relax the Dirichlet assumption, allowing Hamilton's Principal function to be defined locally:

\begin{definition}\label{locally_dirichlet}
Let  $\tau\colon T^*\mathcal{F}_{\partial M} \to \mathcal{F}_{\partial M }$  denote the canonical projection map.
We will say that a theory is locally Dirichlet if $\tau$ restricted to $\Pi (\mathcal{EL}))$ is open and if given any boundary data $\varphi \in\tau( \Pi (\mathcal{EL}))$ the solutions $\chi = (\Phi, P) \in \mathcal{EL}$ of the Euler-Lagrange equations such that $\tau(\chi) = \Phi\mid_{\partial M} = \varphi$, are isolated.
\end{definition}

Notice that Example \ref{pendulum}, the planar pendulum, is locally Dirichlet.   
The following Hamiltonian system however, is not locally Dirichlet:

\begin{example}\label{sphere} Geodesic flow on the sphere.   The system we consider now is the geodesics flow on the sphere $S^2$ for the metric induced by the Euclidean metric in $\mathbb{R}^3$.   In this case $Q = S^2$, $T^*Q = T^*S^2$ and the Hamiltonian function is given by $H(\mathbf{u},\mathbf{p}) = \frac{1}{2} \mathbf{p}\cdot \mathbf{p}$, where $\mathbf{u} \in S^2$ is a unitary vector in $\mathbb{R}^3$, i.e., $\mathbf{u}\cdot \mathbf{u} = 1$, and $\mathbf{p}\in \mathbb{R}^3$ satisfies that $\mathbf{u}\cdot \mathbf{p} = 0$. (Notice that we are identifying $T^*S^2$ with $TS^2$ by using the metric on $S^2$.)   

The space of solutions of Euler-Lagrange equations are just the integral curves $\chi$ of the Hamiltonian vector field defined by $H$.   The projection $u(t)$ of the integral curves $\chi$ are just maximal geodesics in $S^2$, that is, maximal circles in $S^2$.   Hence this theory is not locally Dirichlet because if, even any two points $\mathbf{u}_0$, $\mathbf{u}_1$ in the sphere can be joined by a maximal circle-- in the case that $\mathbf{u}_0$, $\mathbf{u}_1$ are antipodal, that is $\mathbf{u}_1 = - \mathbf{u}_0$-- then there is family of maximal circles passing through them.  Actually this family is parametrized by any circle in the sphere transverse to them and these solutions are not isolated.

Note however that in this case there is an obvious symmetry of the theory that is responsible for this phenomena and that by reducing the theory, such ambiguity will be removed.
\end{example}

%%%%%%%%%%%%%%%%%%
%%%%%%%%%%%%%%%%%%

Now, for locally Dirichlet theories a theorem stating that $\Pi(\mathcal{EL})$is Lagrangian can be easily proved.

\begin{theorem}  Let $(J^1E^*, H)$ be a covariant Hamiltonian first order field theory that is locally Dirichlet, i.e., such that for any boundary data $\varphi$, the solutions of Euler-Lagrange equations $\chi = (\Phi, P) \in \mathcal{EL}$ such that $\Phi\mid_{\partial M} = \varphi$ are isolated, then $\Pi(\mathcal{EL})$ is a Lagrangian submanifold of $T^*\mathcal{F}_{\partial M}$.
\end{theorem}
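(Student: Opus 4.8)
The plan is to promote the isotropy of $\Pi(\mathcal{EL})$ granted by Theorem 2.1 to maximality, using the locally Dirichlet hypothesis only to fix the dimension. First I would recall from Theorem 2.1 that, since $\mathcal{EL}$ is assumed to be a submanifold, $\Pi(\mathcal{EL})$ is isotropic for $\omega = \mathrm{d}\alpha$, so that $\dim \Pi(\mathcal{EL}) \le \tfrac12 \dim T^*\mathcal{F}_{\partial M} = \dim \mathcal{F}_{\partial M}$. Since a submanifold of a symplectic manifold is Lagrangian exactly when it is isotropic and of half the ambient dimension, the whole statement reduces to the matching lower bound $\dim \Pi(\mathcal{EL}) \ge \dim \mathcal{F}_{\partial M}$.

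That lower bound is where the locally Dirichlet hypothesis enters, through the cotangent projection $\tau \colon T^*\mathcal{F}_{\partial M} \to \mathcal{F}_{\partial M}$ restricted to $\Pi(\mathcal{EL})$. The openness clause says precisely that $\tau|_{\Pi(\mathcal{EL})}$ has open image. In the genuinely finite-dimensional situation of Hamiltonian dynamics, where $\mathcal{F}_{\partial M} = Q \times Q$ and $T^*\mathcal{F}_{\partial M} = T^*(Q\times Q)$, this forces the bound at once: the smooth image of a manifold of dimension strictly below $\dim \mathcal{F}_{\partial M}$ has measure zero by Sard's theorem and cannot be open, so $\dim \Pi(\mathcal{EL}) \ge \dim \mathcal{F}_{\partial M}$. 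With the isotropic upper bound this gives equality, and $\Pi(\mathcal{EL})$ is maximal isotropic, hence Lagrangian. The isolation clause is what keeps the fibres of $\tau|_{\Pi(\mathcal{EL})}$ discrete, so that this equidimensional projection is not forced to collapse a continuum of solutions over a single boundary datum, in contrast to the geodesic flow of Example \ref{sphere}.

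I would then extract the local generating function, which is the structure the statement is really pointing at. Because the two dimensions now agree and the projection is open, Sard's theorem makes the regular points of $\tau|_{\Pi(\mathcal{EL})}$ dense, and at any such point $\mathrm{d}(\tau|_{\Pi(\mathcal{EL})})$ is an isomorphism, so $\tau|_{\Pi(\mathcal{EL})}$ is a local diffeomorphism onto an open set $V \subset \mathcal{F}_{\partial M}$; the isolation clause guarantees that the several branches meeting $V$ stay separated. Each branch is therefore a local section of the cotangent bundle, that is the graph of a $1$-form $\beta$ on $V$, and since $\Pi(\mathcal{EL})$ is isotropic one has $\mathrm{d}\beta = \mathrm{graph}(\beta)^*\omega = 0$, exactly as in the discussion preceding Lemma \ref{transversality}. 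Hence $\beta = \mathrm{d}W$ locally, with the potential $W(\varphi) = \int_M \chi^*\Theta_H$ of Eq. \eqref{HamW} evaluated along the selected branch, the locally defined Hamilton principal function.

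The step I expect to be the genuine obstacle is this very lower bound once one leaves the $1+0$ theory. As soon as $\dim \partial M \ge 1$ the boundary field space $\mathcal{F}_{\partial M}$ is an infinite-dimensional manifold of sections, ``half the dimension'' is no longer a number to compare, and Sard's theorem is unavailable, so openness of the image can no longer be converted into a dimension count. There one must argue instead that $\Pi(\mathcal{EL})$ is transverse to the fibres of $\tau$ directly from the isolation of solutions---so that it is locally a section, and an isotropic section is automatically Lagrangian in any dimension---which is precisely the transversality obtained in the $1+0$ case in Lemma \ref{transversality} from the flow-box construction. Replacing that flow-box argument by an implicit function theorem for the boundary-value problem, so as to secure the transversality in the infinite-dimensional setting, is the substantive work hidden behind the otherwise formal dimension count.
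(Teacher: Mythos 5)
Your proposal is correct in the $1+0$ setting, but its primary argument is genuinely different from the paper's. The paper does not argue by dimension count at all: it goes straight to constructing a local generating function, namely Hamilton's principal function $W(u_0',u_1')=\int_0^1\chi'^*\Theta_H$ of Eq.~\eqref{HamW} on a neighborhood $V'$ of boundary data over which, by the isolation clause and Lemma~\ref{transversality}, a single branch of solutions can be selected unambiguously; it then checks $\alpha\mid_{\Pi(V')}=\d W$, so each such patch of $\Pi(\mathcal{EL})$ is an exact Lagrangian graph, and these patches cover $\Pi(\mathcal{EL})$. Your route instead combines the isotropy from Theorem~2.1 (upper bound $\dim\Pi(\mathcal{EL})\le\dim\mathcal{F}_{\partial M}$) with openness of $\tau\mid_{\Pi(\mathcal{EL})}$ and Sard's theorem (matching lower bound), which is cleaner and uses only the openness clause of Definition~\ref{locally_dirichlet} for the main conclusion --- the isolation clause enters only when you afterwards extract the generating function, which is essentially the paper's proof appearing as your second act. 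What each approach buys: yours is more elementary and makes transparent that ``maximal isotropic'' is all that is at stake, but, as you correctly diagnose, the dimension count is meaningless once $\dim\partial M\ge 1$, whereas the paper's generating-function construction is the one it claims extends verbatim to general first order field theories (though the paper, too, only writes the $1+0$ case and leans on Lemma~\ref{transversality}, whose flow-box argument is equally tied to ODE theory --- so the claimed ``straightforward extension'' hides exactly the implicit-function-theorem work you identify). One caveat applying equally to both arguments: everything presupposes that $\Pi(\mathcal{EL})$ is an embedded submanifold so that ``dimension'' and ``isotropic'' make sense, an assumption the paper makes tacitly from Theorem~2.1 onward; and your intermediate claim that regular points of $\tau\mid_{\Pi(\mathcal{EL})}$ are dense needs openness of the \emph{map} (not merely of its image) to rule out an open set of critical points, which is how Definition~\ref{locally_dirichlet} should indeed be read.
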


\begin{proof}  We will write the proof in the particular instance of Hamiltonian dynamics we have been discussing so far, however the proof in the general situation is a straight forward extension of this case.  

The proof amounts to showing that Hamilton's principal function exists locally.  Let $(u_0,u_1) \in Q\times Q$ and $\chi (t) = (u(t), p(t))$ an integral curve of Hamilton's equations such that $u(0) = u_0$ and $u(1) = u_1$.   Let $V$ be an open neighborhood of $\chi$ in $\mathcal{EL}$ that doesn't contain another solution with the same boundary data.  (We use initial data to provide a topology for $\mathcal{EL}$, i.e. an open neighborhood of $\chi$ is the set of all solutions with initial data in an open neighborhood V of $(u_0,u_1)$.) Then, because of Lemma \ref{transversality}, there is an open neighborhood $V' \subset V$ such that every $\chi' \in V'$ is separated from any other solution of Hamilton's equations with the same boundary data as $\chi'$.  Now, define the function $W \colon V' \to \mathbb{R}$ as:
$$
W(u_0',u_1') = \int_{[0,1]}\chi' \Theta_H =  \int_0^1 ( p'_a(t) \dot{u}'^a(t) - H(t,u'(t),p'(t)) \, dt
$$  
where $\chi'(t) = (u'(t),p'(t))$ is the only solution in $V'$ with boundary conditions $(u_0',u_1')$.  Then, $W$ is a generating function for $\Pi (V') \subset T^*Q \times T^*Q$, $V' \subset \mathcal{EL}$ is open, $\alpha\mid_{\Pi(V')} = dW$ and $\Pi(V')$ is Lagrangian, but $\Pi(V')$ is an open subset of  $\Pi(\mathcal{EL})$, for any $(u_0,u_1)$, hence $\Pi (\mathcal{EL})$ is Lagrangian.
\end{proof}\

The following in an example of a Hamiltonian dynamical system which is neither Dirichlet nor locally Dirichlet but for which $\Pi(\mathcal{EL})$ is a Lagrangian submanifold of $T^*Q \times T^*Q$. Thus neither the Dirichlet or the locally Dirichlet properties are necessary and Theorem 3.10 can only be said to be providing sufficient conditions for $\Pi(\mathcal{EL})$ to be Lagrangian.

\begin{example}\label{P_X}
Consider the example of the Hamiltonian system $H(u,p)=p_aX^a(u)$ where $X = X^a(u)\frac{\partial}{\partial u^a}$ is a vector field on Q.\
Hamilton's equations of motion are given by,
\begin{equation}\label{dyn_P_X}
\dot{u}^a = X^a (u) \, , \qquad \dot{p}_a = - p_b \frac{\partial X^b}{\partial u^a} \, .
\end{equation}
and the flow $\varphi_t$ of the system is the cotangent lifting of the flow $\varphi_t^X$ of the vector field $X$
i.e. $\varphi_t = (T\varphi_{-t}^X)^*$. Thus if $X$ is complete so is $X_H$ and the flow $\varphi_1$ exists. $X_H$ is the complete lifting to $T^*Q$ of the vector field $X$. Therefore by Theorem 3.1, $\Pi(\mathcal{EL})$ is Lagrangian.\
Notice however that only points $u_0,u_1$ such that $u_1=\varphi_1^X(u_0)$ are joined by solutions of Hamilton's equations. Thus $\Pi_{Q\times Q}(\Pi(\mathcal{EL}))= \mathrm{graph} (\varphi_1^X)$ and the theory is neither Dirichlet nor locally Dirichlet. 

\end{example}

%%%%%%%%%%%%%%%%%%%%%%%%%%
%%%%%%%%%%%%%%%%%%%%%%%%%%
%%%%%%%%%%%%%%%%%%%%%%%%%%
%%%%%%%%%%%%%%%%%%%%%%%%%%

\subsection{Regular theories and topological phases}

Consider as before a Hamiltonian theory $(T^*Q \times [0,1], H)$, but now we assume that the Hamiltonian function has the form:
$$
H_\lambda (u,p) = \frac{\lambda}{2} p_a p^a + p_a X^a(u) \, \qquad \lambda \geq 0 \, .
$$
This Hamiltonian has the typical form of the Hamiltonian in many field theories.   It has a kinetic term $\frac{\lambda}{2} p_a p^a$ that makes it regular (that is, there exists a well defined invertible Legendre transform) and a linear term $p_a X^a(u)$, where $X = X^a(u) \partial /\partial u^a$ is a vector field on $Q$.   The solutions of the Euler-Lagrange equations of the theory are the integral curves of the Hamiltonian vector field $X_{H_\lambda}$, i.e., solutions of the system of differential equations:
\begin{equation}\label{hamilton_lambda}
\dot{u}^a = \lambda p^a + X^a(u) \, , \qquad \dot{p}_a = - p_b \frac{\partial X^b}{\partial u^a} \, .
\end{equation}
These equations can be written as a system of second order differential equations on $u^a$.  Differentiating the first equation in (\ref{hamilton_lambda}) we get:
$$
\ddot{u}^a = \lambda \dot{p}^a + \frac{\partial X^a}{\partial u^b} \dot{u}^b = \frac{1}{\lambda} X_b\frac{\partial X^b}{\partial u^a} +  \frac{\partial X^a}{\partial u^b} \dot{u}^b - \frac{1}{\lambda} \frac{\partial X^b}{\partial u^a} \dot{u}^a \, .
$$
These equations are the Euler-Lagrange equations defined by the Lagrangian $L \colon TQ \to \mathbb{R}$,
$$
L(u, \dot{u})  = \frac{1}{2\lambda} (\dot{u}^a - X^a(u))(\dot{u}_a - X_a(u)) 
$$
i.e., critical points of:
$$
S = \frac{1}{2\lambda} \int_0^1 || \dot{u} - X(u) ||^2 dt \, ,
$$
which is a shifted version of the standard equation for geodesics.

Now the limit $\lambda \to 0$ gives the Hamiltonian $H_0 = p_a X^a(u)$ discussed in Sect. \ref{sec:Hamilton_function}, Ex. \ref{P_X}. The equations of motion are given by Eqs. (\ref{dyn_P_X}). The theory is shown there in Example 3.11 to not be Dirichlet nor locally Dirichlet. The theory for $\lambda \neq 0 $ however is Dirichlet: Notice that the geodesic flow is complete on a complete manifold.   

We must recall that the limit $\lambda \to 0$ of the action functional $S_\lambda$ in the Hamiltonian formalism
given by,
$$
S_\lambda = \int_0^1 p_a du^a - H_\lambda (u,p)\, dt = \int_0^1 \left( p_a \dot{u}^a -  \frac{\lambda}{2} p_a p^a  - p_a X^a(u) \right) \, dt 
$$
becomes
$$
S_0 =  \int_0^1 p_a(\dot{u}^a - X^a(u) ) dt \, .
$$
which represents a ``topological phase'' of the system in the sense that it no longer depends on a metric and the symmetry group is larger that the group of isometries of the metric used to construct $H_\lambda$, $\lambda \neq 0$.  Actually, the group of symmetries of the theory is the group of diffeomorphisms of $Q$ commuting with the flow of $X$.   

This situation corresponds to what happens in the case of the Hamiltonian formulation of Yang-Mills theories \cite{Ib15}. The action of the theory can be written as:
$$
S_{YM,\lambda} (A,P) = \int_M\left(  P^{\mu\nu}_a F_{\mu\nu}^a - \frac{\lambda}{4} F_{\mu\nu}^a F_a^{\mu\nu}  \right)\mathrm{vol}_M
$$
and we see that if we take the limit $\lambda \to 0$ in the Yang-Mills action above, we get:
$$
S_{YM,0} (A,P) = \int_M P^{\mu\nu}_a F_{\mu\nu}^a \mathrm{vol}_M
$$
whose equations of motion are given by:
$$
F_A = 0\, , \qquad d_A^*P = 0 \, .
$$
Thus the moduli space of solutions of Euler-Lagrange equations is given by:
$$
\mathcal{M} = \{ F_A = 0, d_A^*P = 0 \} /\mathcal{G}_M \, ,
$$
where $\mathcal{G}_M$ denotes the group of gauge transformations of the theory.

%%%%%%%%%%%%%%%%%%%%%%%%%%
%%%%%%%%%%%%%%%%%%%%%%%%%%

\subsection{Dynamics at the boundary}
 We discuss briefly the relation between the Euler-Lagrange equations of the theory and the dynamics near the boundary.   We have been making implicit use of this relation, so it bears spelling out. 

Hamiltonian dynamics on $T^*Q$ considered as a field theory in dimension $1+0$ is defined on the space of `fields' $C^\infty([0,1]; T^*Q)$ and has canonical 1-form $\Theta_H$ defined on $J^1E^* \cong T^*Q \times [0,1]$ (recall Sects. \ref{sec:covariant}).   From this point of view solutions of Euler-Lagrange equations of the theory are functions $\chi\colon [0,1] \to T^*Q$ such that (see Sect. \ref{sec:Euler}, Eq. (\ref{formEL}):
\begin{equation}\label{cov_ham}
\chi^* (i_Z d\Theta_H) = 0 \, , \qquad \forall Z \in \mathfrak{X}(T^*Q) \, .
\end{equation}
 The tangent vector $(\dot{u}, \dot{p})$ to the curve $\chi (t)$ then must lie in the kernel of $d\Theta_H$ and after a trivial computation we get $\dot{u}^a = \partial H/\partial p_a$, $\dot{p}_a = -\partial H/\partial u^a$.  Hence the space $\mathcal{EL}_M$ of solutions of Euler-Lagrange equations of the theory are functions $\chi$ on $[0,1]$ satisfying the standard Hamilton's equations that in covariant form are given by  Eq. (\ref{cov_ham}).   A Hamiltonian dynamics interpretation as an initial value problem can always be achieved near the boundary.  In this situation $\partial ([0,1]) = \{ 0, 1\}$, and a collar of the boundary $U_\epsilon$ has the form, $U_\epsilon = [0,\epsilon ) \cup (1-\epsilon, 1]$, $\epsilon < 1/2$.   For $\epsilon$ small enough, the equations of motion derived from the restriction of the action functional $S$ (Sect. \ref{sec:variational}), Eq. (\ref{action})), i.e., Hamilton's equations, always have a unique solution as an initial value problem for any initial data $(u_0,p_0;u_1,p_1) \in T^*Q \times T^*Q$, and $t \in U_\epsilon$, because of the existence and uniqueness theorem for ode's.  Thus the Hamiltonian dynamical evolution interpretation of field theories near the boundary corresponds exactly with the Hamiltonian mechanical picture of the theory, while the field theoretic description corresponds to a picture in the bulk where fields, i.e., curves $\chi (t)$ are defined for all $t \in [0,1]$.

Hamiltonian dynamics, considered as a field theory, doesn't lead to constraints and there is no need to proceed to reduce the theory at the boundary and the canonical description at the boundary, contrary to what happens in general in higher dimensions, are always defined in all $T^*Q\times T^*Q \cong T^*\mathcal{F}_{\partial M}$.

%%%%%%%%%%%%%%%%%%%%%%%%%%
%%%%%%%%%%%%%%%%%%%%%%%%%%
%%%%%%%%%%%%%%%%%%%%%%%%%%
%%%%%%%%%%%%%%%%%%%%%%%%%%

\section{Constraints}\label{sec:constraints}

%%%%%%%%%%%%%%%%%%%%%%%%%%
%%%%%%%%%%%%%%%%%%%%%%%%%%

\subsection{The Euler-Lagrange equations of a theory with constraints}

We will now introduce constraints on the covariant phase space $J^1E^* \cong T^*Q \times [0,1]$ of Hamiltonian dynamics. As explained in the introduction, the geometry of constrained Hamiltonian dynamics should give us insight into the geometry of Palatini's gravity; The latter is obtained as the limit $\lambda \rightarrow 0$ of a Yang-Mills theory together with an explicit constraint on the momenta fields of the theory.
introducing constraints, that is, we impose additional restrictions on the allowed states of the
theory.

 To keep the discussion as close as possible
 to the situation that we find in Palatini's gravity, we 
 will assume that the constraints in $T^*Q \times [0,1]$ have the form $p_a = \Sigma_a(e)$.
 We assume that there is a trivial bundle $F = K \times [0,1]$ where $K$ is ,for instance,a Lie group
 and a bundle map $\Sigma \colon F \times_{[0,1]} E \to J^1E^* \cong T^*Q \times [0,1]$ of the  form $\Sigma (e,u,t) ) = (u,p; t)$
 with $p = \Sigma (e)$, $e \in K$.  We may imagine that the momenta $p$ are restricted by
 the (in general, non-linear) map $\Sigma$, that is, that they lie in the image of $\Sigma$ and we denote the image
 of $\Sigma$ as $N$ which is a submanifold of $T^*Q \times [0,1]$.  

We will introduce the constraint in the variational principle by means of a Lagrange multiplier.   
We have to restrict the action functional $S(\chi)$ that was defined on $J^1\mathcal{F}^* \cong C^\infty ([0,1], T^*Q)$, to the submanifold $\mathcal{N} \subset C^\infty ([0,1], T^*Q)$ defined as:
$$
\mathcal{N} = \{ \chi \colon [0,1]\to T^*Q \mid \chi (t) = (u(t), p(t)) \, , p(t) \in \mathrm{Im} (\Sigma)  \} \, .
$$
The space of sections $\mathcal{F}$ of the (trivial) bundle $F$ can be identified with the space of maps $C^\infty ([0,1], K)$. Then, the bundle map $\Sigma$ induces a map (denoted with the same symbol) $\Sigma \colon \mathcal{F} \to J^1\mathcal{F}^*$, given by $e(t) \mapsto \chi (t) = (u(t) , p(t) = \Sigma (e(t))$.   With these notations 
we also have that the submanifold $\mathcal{N}$ can be identified with the image of the map $\Sigma$, that is, 
$\mathcal{N} = \{ \chi \colon [0,1]\to T^*Q \mid \chi(t)  = \Sigma (e(t)) \, , e(t)  \in \mathcal{F}  \}$.

A natural extension of Lagrange's multipliers theorem will allow us to obtain the critical points of $S\mid_{\mathcal{N}}$ in terms of the critical points of the extended functional:
\begin{eqnarray}\label{extendedS}
\mathbb{S}(\chi, \Lambda, e) &=& S(\chi) + \langle \Lambda, p - \Sigma (e)\rangle \nonumber \\
&=& \int_0^1 ( p_a(t) \dot{u}^a(t) - H(t,u(t),p(t)) dt + \int_0^1 \Lambda^a(t) (p_a(t) - \Sigma_a (e(t)))\, .
\end{eqnarray}

Notice that the Lagrange multiplier $\Lambda$ lies in the dual of $J^1E^*$, i.e., in $J^1E$ (see discussion below).   A trivial computation of $\mathrm{d} \mathbb{S}$, shows:

\begin{eqnarray}\label{dextendedS}
\mathrm{d}\mathbb{S}_{(\chi, \Lambda, e)}(\delta\chi, \delta\Lambda, \delta e) &=& \int_0^1 \delta p_a ( \dot{u}^a - \frac{\partial H}{\partial p_a } + \Lambda) ^a dt  - \int_0^1 \delta u^a ( \dot{p}_a + \frac{\partial H}{\partial u^a} ) dt \nonumber \\
&+&  \int_0^1 \delta\Lambda^a ( p_a - \Sigma_a (e)) dt -  \int_0^1 \delta e  \frac{\partial \Sigma_a}{\partial e} dt + \mathrm{boundary \,\, terms} \, .
\end{eqnarray}

Thus, from $\delta \mathbb{S}_{u,p,\Lambda, e)} (\delta u, \delta p, \delta \Lambda, \delta e) = 0$ for all $\delta u, \delta p, \delta \Lambda, \delta e$, we get the new Euler-Lagrange equations of the theory:
\begin{equation}\label{hamilton_constraints}
\dot{u}^a = \frac{\partial H}{\partial p_a} - \Lambda^a \, , \qquad \dot{p}_a = - \frac{\partial H}{\partial u^a} \, , 
\end{equation}
together with the constraints given by the restriction to the submanifold $\mathcal{N}$:
\begin{equation}
p_a = \Sigma_a (e) \, ,\label{equation_constraints1}  
\end{equation} 
and the new constraint:
\begin{equation}
\Lambda^a \frac{\partial \Sigma_a}{\partial e} = 0 \, .\label{equation_constraints2}
\end{equation}
So we see that we have obtained the former equations of motion, where only the equation for $\dot{u}^a$ changes with the addition of the Lagrange multiplier $- \Lambda^a$, and the constraints imposed by the submanifold $\mathcal{N}$ automatically implemented.   The Lagrange multiplier must in turn satisfy the constraint imposed by $\Lambda^a$, i.e., $\Lambda^a \partial \Sigma_a /\partial e = 0$. 

We will analyze first the meaning of the new constraint Eq. (\ref{equation_constraints2}).   
Clearly the map $\Sigma_* \colon TF \to T(J^1E^*)$ can be restricted, for any $u$ and $t$ (because it is time and $u$-independent) to a map $\Sigma_{(u,t)}* \colon T_eK \to T_{(u,p)} (T^*Q)$, where $p = \Sigma (e)$. Clearly $\Sigma_{(u,t)*}$ maps $T_eK$ into the tangent space to the submanifold $N$ at the point $(u,\Sigma(e))$.
That is, tangent vectors to $N$ have the form $V = \Sigma_{(u,t)*}(e) (W)$ with $W \in T_eK$.   

Given a manifold $M$ and a subspace $S\subset T_xM$, $x\in M$, we will denote by $S^0$ the annihilator of $S$, that is, $S^0 = \{ \alpha \in T_x^*M \mid \alpha (V) = 0\, , \forall V \in T_xM \}$.  In the previous setting $S = T_{(u,p)}N \subset T_{(u,p)}(T^*Q )$.    Then because $N = \cup_{u \in Q} \Sigma_{(u,t)}(K)$, then $T_{u,\Sigma(e)}N = T_uQ \oplus \Sigma_{(u,t)*}T_eK$.  But now $S^0 \subset T^*(T^*Q)$ and, at the point $(u,\Sigma(e))$ we can identify $T_{(u,\Sigma(e)}^*(T^*Q)$ with $T^*_uQ \oplus T_{\Sigma(e)}^*(T^*_uQ)$, hence $T_u^*Q \oplus T_uQ$.
Then the vectors in $(T_{u,p}N)^0$ will have the form $(0,\Lambda)$ with $\Lambda = \Lambda^a \partial /\partial u^a \in T_uQ$ satisfying that $\langle \Lambda, V \rangle = 0$ for all $V = \Sigma_{(u,t)*}(W)$, $W \in T_eK$.  Or, in other words, $\Lambda^a \partial \Sigma_a /\partial e = 0$ using a local chart in $T^*Q$ adapted to the map $\Sigma$.

Thus we have shown that the new constraint (\ref{equation_constraints2}) is just the condition that the Lagrange multiplier $\Lambda$ lies in the annihilator or the polar to the tangent space to the constraint submanifold.   For this reason we will call this constraint the polar constraint.    Hence the term added to the equation for $\dot{u}$ in (\ref{hamilton_constraints}) means that because of the constraints in phase space, there is an extra freedom in the equations of motion: any vector in the polar space to the constraint submanifold can be added to the equations of motion.

Once we understand the modified Euler-Lagrange equations of the theory, it remains to determine the implications near the boundary.

%%%%%%%%%%%%%%%%%%%%%%%%%%
%%%%%%%%%%%%%%%%%%%%%%%%%%

\subsection{Constraints and the dynamics at the boundary}

Near the boundary, as discussed in the last section, the dynamics is described by the vector field on $T^*Q$ given by:
\begin{equation}\label{Xcons}
X = \left( \frac{\partial H}{\partial p_a} - \Lambda^a \right)  \frac{\partial}{\partial u^a} - \frac{\partial H}{\partial u^a}\frac{\partial}{\partial p_a}  \, ,
\end{equation}
together with the constraints (now understood as defining a submanifold at the boundary) given by Eqs. (\ref{equation_constraints1}), (\ref{equation_constraints2}).  Thus the first task is to check that they are consistent. 

If we denote by $\mathcal{M}_0 = T^*Q \oplus_Q TQ \times K$ the extended phase space with elements $(u,p,\Lambda, e)$, the vector field $X$ defines a dynamical system on the submanifold $\mathcal{M}_1$ defined by the constraints above, that is,
$$
\mathcal{M}_1 = \{(u,p,\Lambda, e) \in \mathcal{M}_0 \mid p = \Sigma (e), \, \Lambda \in (TN)^0  \} \, .
$$
This system has a presymplectic picture as in the standard discussion for Hamiltonian field theories at the boundary \cite{Ib15}.   Consider the presymplectic form $\Omega_0$ on $\mathcal{M}_0$ obtained as the pull-back of the canonical 2--form $\omega$ on $T^*Q$ along the canonical projection from $\mathcal{M}_0$ onto $T^*Q$, i.e., $\Omega_0 = du^a \wedge dp_a$.   Define now the Hamiltonian function on $\mathcal{M}_0$:
$$
\mathcal{H}_0 = H(u,p) + \Lambda^a (p_a - \Sigma_a(e)) \, .
$$
The presymplectic system $(\mathcal{M}_0, \Omega_0, \mathcal{H}_0)$ is equivalent to the constrained dynamical field $X$ above. We will analyze it using Gotay's constraints algorithm for presymplectic systems \cite{Go78}.  If we consider the solutions to the vector fields $\Gamma$ satisfying the equation,
\begin{equation}\label{dyn0}
i_\Gamma \Omega_0 = d\mathcal{H}_0 \, ,
\end{equation}
we find easily that they are integral curves of vector fields of the form given by Eq. (\ref{Xcons}) together with the constraint defined by the map $\Sigma$ and the polar constraint.    Actually, Eq. (\ref{dyn0}), would have a solution iff $i_Z d\mathcal{H}_0 = 0$ for all $Z \in \ker \Omega_0$. But $\ker \Omega_0 = \mathrm{span}\{ \partial /\partial \Lambda^a , \partial / \partial e \}$, hence we get:
\begin{eqnarray*}
0 &=& i_{\partial /\partial \Lambda^a} d\mathcal{H}_0 =  \frac{\partial \mathcal{H}_0}{\partial \Lambda^a} \quad \mathrm{iff} \quad  p_a = \Sigma_a(e) \, , \\
0 &=& i_{\partial /\partial e} d\mathcal{H}_0 =  \frac{\partial \mathcal{H}_0}{\partial e} \quad \mathrm{iff} \quad\Lambda^a \frac{\partial \Sigma_a}{\partial e} = 0 \, .
\end{eqnarray*}
Hence the submanifold $\mathcal{M}_1$ is just the primary constraints submanifold of the theory.   

Consider now the restriction of the dynamics to $\mathcal{M}_1$, that is, we are looking for a vector field on $\mathcal{M}_1$ of the form,
$$
\Gamma = \left( \frac{\partial H}{\partial p_a} - \Lambda^a \right)  \frac{\partial}{\partial u^a} - \frac{\partial H}{\partial u^a}\frac{\partial}{\partial p_a}  + C^a \frac{\partial}{\partial \Lambda^a} + D \frac{\partial}{\partial e} \, ,
$$
such that 
$$
i_\Gamma \Omega_1 = d \mathcal{H}_1,
$$
where $\Omega_1$ is the restriction of $\Omega_0$ to $\mathcal{M}_1$ and likewise for $\mathcal{H}_1$.

Computing the derivative of the constraint functions $\phi_a = p_a - \Sigma_a(e)$ with respect to the vector field $\Gamma$ we get,
$$
\Gamma (\phi_a) = \dot{p}_a - \frac{\partial \Sigma_a}{\partial e} \dot{e} = - \frac{\partial H}{\partial u^a} - D \frac{\partial \Sigma_a}{\partial e} = 0 \, ,
$$
on $\mathcal{M}_1$.   However on $\mathcal{M}_1$, contracting the last expression with $\Lambda^a$ and using the polar constraint we get that along $\mathcal{M}_1$, 
\begin{equation}\label{Hinv}
\Lambda^a \frac{\partial H}{\partial u^a} = 0 \, , \qquad \forall \Lambda\in (TN)^0 \, .
\end{equation}
 We conclude that $\frac{\partial H}{\partial u^a} \in TN$.  But if $\frac{\partial H}{\partial u}$ is a vector in the tangent space to the constraint manifold, then there exists a vector $D$ tangent to $K$  (perhaps not unique) such that $ \partial H /\partial u^a = -\partial \Sigma_a/\partial e$ and  consequently the constraints $\phi_a$ are stable.

Moreover, computing the evolution of the polar constraint $\psi = \Lambda^a \partial \Sigma_a /\partial e $ with respect to the dynamics, we get that on $\mathcal{M}_1$
$$
\Gamma (\psi) = \dot{\Lambda}^a \frac{\partial \Sigma_a}{\partial e}  + \Lambda \frac{\partial^2 \Sigma_a}{\partial e^2} \dot{e} = C^a \frac{\partial \Sigma_a}{\partial e} + D\Lambda^a \frac{\partial^2 \Sigma_a}{\partial e^2} = 0 \,   .
$$
Then given $D$, the vector $C^a$ is determined (not uniquely) from the last condition and there are no further constraints.   The constraints algorithm ends here and $\mathcal{M}_1$ is the final constraints submanifold. 

The reduced space of the system $\mathcal{R}$ is obtained by quotienting $\mathcal{M}_1$ by the null directions $\Omega_1$. Denoting by $K_1 = \ker \Omega_1$ the integrable characteristic distribution of $\Omega_1$,we then have that
$$
\mathcal{R} = \mathcal{M}_1/\mathcal{K}_1 \, ,
$$
where $\mathcal{K}_1$ denotes the leaves of the distribution $K_1$.

Finally, because 
$$
\Omega_1 = \Omega_0\mid_{\mathcal{M}_1} = du^a \wedge d\Sigma_a =  \frac{\partial \Sigma_a}{\partial e} du^a \wedge de \ ,
$$ $K_1 = \ker \Omega_1$  is spanned by the vector fields $\partial /\partial \Lambda^a$ and $\Lambda^a \partial /\partial u^a$.   Now notice that because $[\partial /\partial \Lambda^a,  \Lambda^b \partial /\partial u^b] = \partial /\partial u^a$, we can compute the quotient in stages. Quotienting first by the distribution generated by $\partial /\partial \Lambda^a$ we get,
$$
\mathcal{M}_1 / \mathrm{span} \{ \partial /\partial \Lambda^a \} \cong N \, ,
$$
and, fnally,
$$
\mathcal{R}  = N / TN^0 \, ,
$$
where now $TN^0 \subset TN$ is the tautological distribution associated to the constraint $p = \Sigma (e)$.  Notice that because of Eq. (\ref{Hinv}), the Hamiltonian function $H$ descends to the quotient and defines a Hamiltonian system on the reduced space $\mathcal{R}$.

%%%%%%%%%%%%%%%%%%%%%%%%%%
%%%%%%%%%%%%%%%%%%%%%%%%%%
%%%%%%%%%%%%%%%%%%%%%%%%%%
%%%%%%%%%%%%%%%%%%%%%%%%%%

%\newpage

\subsection*{Acknowledgments}
The authors would like to acknowledge partial support from the Spanish MEC grants MTM2014-54692-P, QUITEMAD+ and S2013/ICE-2801.

%%%%%%%%%%%%%%%%%%%%%%%%%%
%%%%%%%%%%%%%%%%%%%%%%%%%%

\end{document}